\newcommand{\eps}{\varepsilon}
\newcommand{\tto}[1][]{\mathrel{\vphantom{\xrightarrow{#1}}\smash{\xrightarrow{#1}}\vphantom{\to}^*}}
\begin{document}
\title{A Note on Limited Pushdown Alphabets in Stateless Deterministic Pushdown Automata}
\titlerunning{A Note on Limited Pushdown Alphabets in Stateless DPDAs}
\authorrunning{T. Masopust}
\author{Tom\'{a}\v{s} Masopust}
\institute{
  Institute of Mathematics, Academy of Sciences of the Czech Republic\\
  {\v Z}i{\v z}kova 22, 616 62 Brno, Czech Republic\\
  \email{masopust@math.cas.cz}
}

\maketitle
\begin{abstract}
  Recently, an infinite hierarchy of languages accepted by stateless deterministic pushdown automata has been established based on the number of pushdown symbols. However, the witness language for the $n$th level of the hierarchy is over an input alphabet with $2(n-1)$ elements. In this paper, we improve this result by showing that a binary alphabet is sufficient to establish this hierarchy. As a consequence of our construction, we solve the open problem formulated by Meduna et al. Then we extend these results to $m$\mbox{-}state realtime deterministic pushdown automata, for all $m\ge 1$. The existence of such a hierarchy for $m$-state deterministic pushdown automata is left open.
\end{abstract}

\section{Introduction}
  Pushdown automata, and especially deterministic pushdown automata, play a central role in many applications of context-free languages. Considering state complexity, it is well-known that any pushdown automaton can be transformed to an equivalent pushdown automaton with a single state. Thus, the hierarchy based on the number of states collapses to only one level. In that case, the sole state is not important and one-state automata are also called stateless in the literature. On the other hand, however, no such transformation is possible for deterministic pushdown automata since it is shown in~\cite{Harrison} that the number of states in deterministic pushdown automata establishes an infinite state hierarchy on the family of deterministic context-free languages. The reader is referred to~\cite{Harrison} for more details, including binary witness languages.
  
  Recently, Meduna et al.~\cite{mvz12} have established an infinite hierarchy of languages on the lowest level of the state hierarchy, on the level of stateless deterministic pushdown automata, based on the number of pushdown symbols. They have shown that there exist languages accepted by stateless deterministic pushdown automata with $n$ pushdown symbols that cannot be accepted by any stateless deterministic pushdown automaton with $n-1$ pushdown symbols. However, the size of the input alphabet of the witness language for each level of the hierarchy is almost twice bigger than the size of the pushdown alphabet, which is linear with respect to the level of the hierarchy. More specifically, they constructed an infinite sequence of languages $L_n$, for all $n\ge 1$, such that the alphabet of the language $L_n$ is of cardinality $2n$, and they proved that any stateless deterministic pushdown automaton accepting the language $L_n$ requires at least $n+1$ pushdown symbols. Thus, their construction requires a growing alphabet and it was left open whether the hierarchy can be shown with witness languages over a fixed alphabet. In addition, they formulated an open question of what is the role of non-input pushdown symbols in this hierarchy, that is, can a similar infinite hierarchy be established based on the number of non-input pushdown symbols?
  
  In this paper, we improve their result by constructing a sequence of witness languages over a binary alphabet, which is the best possible improvement because any unary language accepted by a stateless deterministic pushdown automaton is a singleton. In other words, languages accepted by deterministic pushdown automata (by empty pushdown) are prefix-free~\cite{Harrison}. As an immediate consequence of our construction, we obtain a solution to the open problem. Then we show that a similar idea can be used for realtime pushdown automata to establish infinite hierarchies based on the number of pushdown symbols on each level of the state hierarchy. Again, the witness languages are binary. However, the case of general deterministic pushdown automata is left open.

  Stateless automata of many types are of great interest and have widely been investigated in the literature. For instance, stateless multicounter machines have been studied in~\cite{ei09}, hierarchies of stateless multicounter $5'\to 3'$ Watson-Crick automata have been investigated in~\cite{ehn11,nhe11}, multihead finite and pushdown automata have been studied in~\cite{fi09,iko08}, stateless restarting automata in~\cite{kmo10,kmo10b}, and stateless automata and their relationship to P systems in~\cite{ydi08}. Pushdown store languages, that is, languages consisting of strings occurring on the pushdown along accepting computations of a pushdown automaton, have been studied in~\cite{malcher,malcherTR}, including the languages of stateless pushdown automata. Some decision results concerning stateless and deterministic pushdown automata can be found in~\cite{Gallier81,Valiant}.

\section{Preliminaries}
  In this paper, we assume that the reader is familiar with automata and formal language theory~\cite{Harrison,Hopcroft:1969:FLR:1096945,Salomaa}. For a set $A$, $|A|$ denotes the cardinality of $A$, and $2^A$ denotes the powerset of $A$. An alphabet is a finite nonempty set. For an alphabet $V$, $V^*$ represents the free monoid generated by $V$, where the unit (the empty string) is denoted by $\eps$.
  
  A {\em pushdown automaton\/} (PDA) is a septuple $M = (Q, \Sigma, \Gamma, \delta, q_0, Z_0, F)$, where $Q$ is the finite set of states, $\Sigma$ is the input alphabet, $\Gamma$ is the pushdown alphabet, $\delta$ is the transition function from $Q\times \Gamma \times (\Sigma\cup\{\eps\})$ to the set of finite subsets of $Q\times \Gamma^*$, $q_0\in Q$ is the initial state, $Z_0\in\Gamma$ is the initial pushdown symbol, and $F\subseteq Q$ is the set of accepting states. A configuration of $M$ is any element of $Q\times \Gamma^*\times\Sigma^*$. For a configuration $(q,\gamma,\sigma)$, $q$ denotes the current state of $M$, $\gamma$ denotes the content of the pushdown (with the top as the rightmost symbol), and $\sigma$ denotes the unread part of the input string. The automaton changes the configuration according to the transition function $\delta$. This one-step relation is denoted by $\vdash$ and defined so that $(q,\gamma A, a\sigma)\vdash (p,\gamma w, \sigma)$ if $(p,w)$ is in $\delta(q,A,a)$, where $\gamma$ and $w$ are strings of pushdown symbols, $A$ is a pushdown symbol, $a$ is an input symbol or $\eps$, and $\sigma$ is a string of input symbols. Let $\vdash^*$ denote the reflexive and transitive closure of the relation $\vdash$. The language accepted by $M$ (by a final state and empty pushdown) is denoted by $L(M)$ and defined as $L(M) = \{ w \in\Sigma^* \mid (q_0, Z_0, w) \vdash^* (f, \eps,\eps)\text{ for some } f\in F\}$. 
  
  Automaton $M$ is {\em deterministic} (DPDA) if there is no more than one move the automaton can make from any configuration, that is, $|\delta(q,A,a)|\le 1$, for any state $q$, pushdown symbol $A$, and input symbol $a$, and if $\delta(q,A,\eps)$ is defined, that is, $\delta(q,A,\eps)\neq\emptyset$, then $\delta(q,A,a)$ is not defined for any input symbol $a$. 
  
  Automaton $M$ is {\em realtime} (RPDA) if it is deterministic and if no $\eps$-transitions are defined, that is, if $\delta(q,A,a)$ is defined, then $a\neq\eps$.
    
  Automaton $M$ is {\em stateless\/} (SPDA) if it has only one state, that is, $|Q|=1$. Moreover, in stateless pushdown automata, we allow an initial pushdown string $\alpha\in\Gamma^+$ instead of an initial pushdown symbol. Thus, in this case, we simply write $M = (\Sigma, \Gamma, \delta, \alpha)$. Note that unlike the general case of deterministic pushdown automata, in the case of stateless deterministic pushdown automata, the realtime restriction does not decrease the power of the machine~\cite{Harrison}.

  In what follows, the notation $(q,\gamma) \xrightarrow{a} (p,\gamma')$, for $a$ in $\Sigma$ and $\gamma,\gamma'$ in $\Gamma^*$, denotes the computational step $(q,\gamma,a)\vdash (p,\gamma',\eps)$, and for a string $w\in\Sigma^*$, the notation $(q,\gamma) \tto[w] (p,\gamma')$ denotes the maximal computation $(q,\gamma, w)\vdash^* (p, \gamma',\eps)$, that is, no other $\eps$-tran\-si\-tions are possible from the configuration $(p, \gamma',\eps)$. For stateless automata, the state is eliminated from this notation.
  
  We now recall the definition of families of languages accepted by stateless deterministic pushdown automata with $n$ pushdown symbols.
  \begin{definition}\label{def1}
    A stateless deterministic pushdown automaton $M=(\Sigma,\Gamma,\delta,\alpha)$ is $n$-pushdown-alphabet-limited, for $n\ge 1$, if $|\Gamma|\le n$. The family $n$-SDPDA denotes all languages accepted by an $n$-pushdown-alphabet-limited stateless deterministic pushdown automaton. Similarly, the family $n$-SRPDA denotes all languages accepted by an $n$-pushdown-alphabet-limited stateless realtime pushdown automaton.
  \end{definition}

\section{Main results}
  Recall that it is known~\cite{Harrison} that any language $L$ accepted by a stateless deterministic pushdown automaton is prefix-free. Thus, if $L$ contains $\eps$, then $L=\{\eps\}$. It is also known that any stateless deterministic pushdown automaton accepting a language different from $\{\eps\}$ can be transformed to an equivalent stateless realtime pushdown automaton. To simplify proves, we extend this result by showing that the resulting stateless realtime pushdown automaton does not require any new pushdown symbol.
  
  \begin{theorem}\label{thm1}
    For any stateless deterministic pushdown automaton with $n$ pushdown symbols accepting a language $L\neq\{\eps\}$, there exists an equivalent stateless realtime pushdown automaton with no more than $n$ pushdown symbols.
  \end{theorem}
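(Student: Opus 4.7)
The plan is to convert the given SDPDA $M=(\Sigma,\Gamma,\delta,\alpha)$ into an SRPDA by precomputing, for each pushdown symbol, the net effect of the $\eps$-closure it triggers. The key structural fact in the stateless setting is that $\eps$-moves depend only on the topmost pushdown symbol; combined with determinism this partitions $\Gamma$ into \emph{reading} symbols (those $A$ with $\delta(A,\eps)$ undefined) and \emph{$\eps$-symbols} (those $A$ with $\delta(A,\eps)$ defined, and hence $\delta(A,a)$ undefined for every $a\in\Sigma$). Consequently the $\eps$-closure starting from any stack is a deterministic, single-threaded computation controlled entirely by the current top.

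First I would eliminate infinite $\eps$-loops without enlarging $\Gamma$. For every $A\in\Gamma$ the sequence of $\eps$-moves from the singleton stack $A$ either terminates (emptying the stack or uncovering a reading top) or loops; call $A$ \emph{tame} in the first case and \emph{wild} in the second. Any transition whose right-hand side leads, after $\eps$-reduction, to a wild top is useless, since the ensuing computation can neither consume further input nor empty the stack. Iteratively marking and deleting such transitions produces an equivalent SDPDA using only the original symbols.

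Next, for each tame $A$, set $\mathrm{red}(A)\in\Gamma^*$ to be the stack after the maximal $\eps$-computation from $A$, and extend this to $\mathrm{Red}(\gamma)$ on strings by iterating reduction on the top and propagating downward whenever the top reduces to $\eps$. Define $M'=(\Sigma,\Gamma,\delta',\mathrm{Red}(\alpha))$ by $\delta'(B,a)=\mathrm{Red}(\delta(B,a))$ for each reading symbol $B$ and each $a\in\Sigma$ for which $\delta(B,a)$ is defined. Since $L\neq\{\eps\}$ and SDPDA languages are prefix-free, $\eps\notin L$, so $\mathrm{Red}(\alpha)\neq\eps$ and the initial stack is legitimate. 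By construction $\mathrm{Red}(\gamma)$ is either $\eps$ or ends in a reading symbol, so $M'$ maintains the invariant that whenever the stack is nonempty its top is a reading symbol; a transition yielding $\mathrm{Red}(w)=\eps$ simply pops $B$ without pushing anything, which is perfectly legal in an SRPDA.

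A routine induction on the length of the input then shows $L(M')=L(M)$, and since $\Gamma$ is reused verbatim, $M'$ has at most $n$ pushdown symbols. The main technical obstacle will be handling the loop-elimination cleanly: removing one ``wild-exposing'' transition can cause another transition's right-hand side to reduce further down and possibly expose a new wild symbol, so the pruning has to be computed as a fixed point; one must verify that this iterative deletion never forces the introduction of a new pushdown symbol, which holds because the procedure only removes transitions and never creates symbols.
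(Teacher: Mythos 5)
There is a genuine gap: the invariant you rely on --- ``whenever the stack of $M'$ is nonempty its top is a reading symbol'' --- is false under your definition of $\mathrm{Red}$, because you reduce only from the top of a pushed string and stop at the first reading symbol, leaving $\eps$-symbols buried underneath. Both $\mathrm{red}(A)$, $\mathrm{Red}(\delta(B,a))$ and $\mathrm{Red}(\alpha)$ may contain $\eps$-symbols below their topmost (reading) symbol, and such a symbol becomes exposed as soon as the part above it is popped. At that moment the original automaton $M$ performs further $\eps$-moves, but $M'$ has neither $\eps$-moves nor input moves on that symbol and is stuck. Concretely, take $\Gamma=\{A,C,D,E\}$, $\Sigma=\{a,b\}$, $\alpha=A$, with $\delta(A,\eps)=CD$, $\delta(D,a)=\eps$, $\delta(C,\eps)=E$, $\delta(E,b)=\eps$. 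Here $L(M)=\{ab\}$, the reading symbols are $D,E$, and $\mathrm{Red}(\alpha)=\mathrm{red}(A)=CD$ because the reduction stops at the reading top $D$. Your $M'$ starts with stack $CD$, pops $D$ on $a$, and then faces the $\eps$-symbol $C$ on top with no applicable transition, so $L(M')=\emptyset$. The same problem infects the ``pop without pushing'' case you single out as harmless: when $\mathrm{Red}(w)=\eps$, whether the computation of $M$ continues depends on $\eps$-reductions of the symbol uncovered beneath $B$, which your construction cannot perform. (A secondary instance of the same issue appears in your wild/tame pruning: whether a transition whose right-hand side reduces to $\eps$ leads to a wild top depends on the stack below, not on the transition alone.)

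The repair is to eliminate $\eps$-symbols from \emph{every} position, not just from the tops of pushed strings --- which is exactly what the paper does: for an $\eps$-transition $X\xrightarrow{\ \eps\ }\sigma$ one first argues (assuming every symbol occurs in some accepting computation) that $X$ cannot occur in $\sigma$, then substitutes $\sigma$ for all occurrences of $X$ on all right-hand sides and in the initial string, deletes $X$ from $\Gamma$, and iterates; after this, no $\eps$-symbol can ever be buried in the stack, and the symbol count does not increase. Your top-only closure $\mathrm{Red}$ could in principle be replaced by such an everywhere-substitution iterated to a fixed point, but as written the ``routine induction'' on the input length fails at exactly the exposure step described above.
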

  \begin{proof}
    Assume that all pushdown symbols appear in an accepting computation. Let $X\xrightarrow{\ \eps\ }\sigma$ be an $\eps$-transition of $M=(\Sigma,\Gamma,\delta,\alpha)$. If $X$ appears in $\sigma$, then any computation reaching $X$ on top of the pushdown is not accepting because $X$ cannot be eliminated from the pushdown; it is always replaced with $\sigma$ containing $X$ again, which contradicts our assumption. Modify the automaton $M$ by replacing all occurrences of $X$ on right-hand sides of transitions with $\sigma$, and by removing $X\xrightarrow{\ \eps\ }\sigma$ from $\delta$ and $X$ from $\Gamma$. The accepted language is not changed because the only action of $X$ was to be replaced with $\sigma$, and the resulting automaton has one $\eps$-transition less. Repeating this substitution results in an equivalent stateless realtime pushdown automaton with no more than $n$ pushdown symbols.
  \end{proof}

  This result can be rewritten as follows.
  \begin{corollary}
    For all $n\ge 1$, $n\text{-SDPDA} =n\text{-SRPDA}$.
  \end{corollary}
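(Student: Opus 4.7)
The plan is to establish the corollary by proving two inclusions in the language-family equality, both of which are essentially bookkeeping once Theorem~\ref{thm1} is available.

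The inclusion $n\text{-SRPDA} \subseteq n\text{-SDPDA}$ is immediate from the definitions. A stateless realtime pushdown automaton is, by definition, a stateless deterministic pushdown automaton in which no $\eps$-transitions are defined. Hence every $n$-pushdown-alphabet-limited SRPDA is already an $n$-pushdown-alphabet-limited SDPDA, and the inclusion holds with no construction required.

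For the converse $n\text{-SDPDA} \subseteq n\text{-SRPDA}$, I would take an arbitrary $L \in n\text{-SDPDA}$, witnessed by some SDPDA $M = (\Sigma,\Gamma,\delta,\alpha)$ with $|\Gamma|\le n$. If $L\ne\{\eps\}$, Theorem~\ref{thm1} supplies an equivalent SRPDA whose pushdown alphabet has size at most $|\Gamma|\le n$, so $L\in n\text{-SRPDA}$. The only edge case is $L=\{\eps\}$, which is already flagged in the discussion preceding Theorem~\ref{thm1} and can be dealt with by the convention adopted there (a realtime automaton cannot empty a nonempty initial pushdown without reading input, so $\{\eps\}$ is treated as a degenerate exception).

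There is essentially no obstacle: all nontrivial content is packaged in Theorem~\ref{thm1}, and the corollary only re-expresses it in the notation of Definition~\ref{def1}. The main thing to be careful about is to state both inclusions explicitly and to flag the $\{\eps\}$ case so that the equality of families is unambiguous.
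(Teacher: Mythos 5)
Your proposal is correct and matches the paper's (implicit) argument exactly: the corollary is stated as a direct restatement of Theorem~\ref{thm1}, with the converse inclusion holding trivially because every realtime automaton is deterministic. Your explicit flagging of the $L=\{\eps\}$ case is a reasonable extra precaution that the paper itself leaves tacit, but it does not change the substance of the argument.
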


  In the following subsection, we improve the result presented in~\cite{mvz12} by giving binary witness languages and answering the open problem formulated there.

\subsection{Binary alphabet}
  First, note that from the fact that any language accepted by a stateless deterministic pushdown automaton is prefix-free, any such a unary language is of the form $a^n$, for some $n\ge 0$. Every such language can be accepted by a stateless deterministic pushdown automaton with only one pushdown symbol (equal to the input symbol) starting from the initial string $a^n$ with $\delta(a,a)=\eps$. Thus, to establish the infinite hierarchy based on the number of pushdown symbols using only a binary alphabet is the best possible improvement that can be done. To do this, we define the following sequence of languages
  \begin{equation}\label{LN}
    \begin{aligned}
      L_1 & = \{a^c\}, \text{ for some } c\ge 0\,,\text{ and }\\
      L_n & = \{b^k a \mid 1\le k\le n-1\}, \text{ for any } n\ge 2\,.
    \end{aligned}
  \end{equation}
  
  We now prove that for all $n\ge 1$, any stateless deterministic pushdown automaton accepting the language $L_n$ requires at least $n$ pushdown symbols. First, we show that $n$ pushdown symbols is sufficient.
  
  \begin{lemma}
    For any $n\ge 1$, there exists a stateless deterministic pushdown automaton with $n$ pushdown symbols accepting the language $L_n$ defined in (\ref{LN}).
  \end{lemma}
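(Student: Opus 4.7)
The plan is to exhibit, for each $n\ge 1$, an explicit stateless deterministic pushdown automaton with exactly $n$ pushdown symbols that accepts $L_n$. The argument splits into the trivial base case $n=1$ and a uniform construction for $n\ge 2$.

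For $n=1$ I would accept $L_1=\{a^c\}$ using the one-symbol SPDA with pushdown alphabet $\{a\}$, initial pushdown string $a^c$, and the single realtime transition $\delta(a,a)=\eps$; in the degenerate case $c=0$ I would instead take pushdown alphabet $\{A\}$ with initial pushdown $A$ and the $\eps$-transition $\delta(A,\eps)=\eps$. In either case one pushdown symbol suffices.

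For $n\ge 2$ I would use the pushdown alphabet $\Gamma=\{A_0,A_1,\ldots,A_{n-1}\}$ of size exactly $n$, initial pushdown $A_0$, and the transitions
\[
    \delta(A_j,b)=A_{j+1} \text{ for } 0\le j\le n-2,\qquad \delta(A_j,a)=\eps \text{ for } 1\le j\le n-1.
\]
The key idea is that the pushdown at all times consists of a single symbol $A_k$, whose subscript $k$ records how many $b$'s have already been read: reading another $b$ just increments $k$ and is therefore defined only while $k<n-1$, whereas reading $a$ empties the pushdown and is therefore defined only while $k\ge 1$. Consequently the accepting computations are exactly $A_0\tto[b^k] A_k\xrightarrow{a}\eps$ for $1\le k\le n-1$.

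To conclude I would check determinism by inspection (no two defined transitions share the same top symbol and input symbol, and no $\eps$-transitions appear) and confirm that $L(M)=L_n$: the display above covers every member of $L_n$, and every other input word causes the automaton to block before the pushdown is emptied. I do not expect any real obstacle here, as this lemma is the easy upper-bound half of the hierarchy result; the substantive work lies in the matching lower bound proved in the following statement.
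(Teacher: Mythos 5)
Your proposal is correct and matches the paper's proof essentially verbatim: the same single-symbol automaton for $L_1$ and the same counter-style construction $\delta(A_j,b)=A_{j+1}$, $\delta(A_j,a)=\eps$ with $n$ pushdown symbols for $n\ge 2$. Your separate treatment of the degenerate case $c=0$ (using one $\eps$-transition, since the initial pushdown string must be nonempty) is a small extra care the paper glosses over, but otherwise the two arguments are identical.
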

    \begin{proof}
    As mentioned above, there exists a stateless deterministic pushdown automaton with one pushdown symbol accepting the language $L_1$. Thus, consider the language $L_n$ for $n\ge 2$. This language is accepted by the stateless pushdown automaton $M_n=(\{a,b\},\{X_0,X_1,\ldots,X_{n-1}),\delta,X_0)$, where the transition function $\delta$ is defined as follows:
    \begin{enumerate}
      \item $\delta(X_i,b)=X_{i+1}$, for $0\le i\le n-2$,
      \item $\delta(X_i,a)=\eps$, for $1\le i \le n-1$.
    \end{enumerate}
    Symbols $X_1, X_2,\ldots, X_{n}$ are used to count the number of $b$'s and to restrict this number to $n-1$. Notice that $M_n$ is deterministic (realtime) and that $|\Gamma|=n$.
  \end{proof}
    
  We next prove that at least $n$ pushdown symbols are necessary for the language $L_n$.
  \begin{lemma}
    For any $n\ge 1$, there does not exist a stateless deterministic pushdown automaton accepting the language $L_n$ defined in (\ref{LN}) with $n-1$ pushdown symbols.
  \end{lemma}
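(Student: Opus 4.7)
The plan is to argue by contradiction: suppose some stateless DPDA $M$ with $|\Gamma|=n-1$ accepts $L_n$. The case $n=1$ is vacuous, since every pushdown automaton must have at least one pushdown symbol; so I assume $n\ge 2$. By the corollary to Theorem~\ref{thm1} I may additionally assume that $M=(\{a,b\},\Gamma,\delta,\alpha)$ is realtime, so every transition consumes exactly one input symbol and rewrites only the top of the stack.

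The first step is to exploit the $n-1$ accepted strings $ba,b^2a,\ldots,b^{n-1}a$. Since a single realtime step reading $a$ can empty the stack only if that stack already consists of a single symbol whose $a$-transition produces $\eps$, the stack reached after reading $b^k$ must be a single pushdown symbol $X_k$ with $\delta(X_k,a)=\eps$. Applying the same observation one step earlier gives $\delta(X_k,b)=X_{k+1}$ as a single symbol for $k=1,\ldots,n-2$.

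Next, I will show that $X_1,\ldots,X_{n-1}$ are pairwise distinct by a pumping-style argument. An equality $X_i=X_j$ with $i<j$ would cause reading $b$'s from $X_i$ to cycle with period $j-i$, so $b^{i+t(j-i)}a$ would be accepted for every $t\ge 0$, producing accepted strings $b^ka$ of unbounded $k$ and contradicting the finite set $L_n$. Hence $M$ already uses $n-1$ distinct pushdown symbols on $X_1,\ldots,X_{n-1}$.

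Finally, to force one more pushdown symbol, I will analyze the initial stack. Writing $\alpha=\gamma Y$ with $Y$ on top, the requirement that reading $b$ from $\alpha$ produces the single symbol $X_1$ forces $|\alpha|\in\{1,2\}$, giving two sub-cases: either $\alpha=Y$ with $\delta(Y,b)=X_1$, or $\alpha=X_1 Y$ with $\delta(Y,b)=\eps$. In the first sub-case, any identification $Y=X_k$ would make $a$ accepted, contradicting $a\notin L_n$. In the second sub-case, $\delta(Y,b)=\eps$ rules out $Y=X_k$ for $k\le n-2$ (since $\delta(X_k,b)=X_{k+1}$) and also $Y=X_{n-1}$ (since this would make $b^n$ accepted). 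Either way $Y$ differs from every $X_k$, yielding $n$ distinct pushdown symbols and contradicting $|\Gamma|=n-1$. I expect this final initial-stack analysis to be the main obstacle, as each candidate identification $Y=X_k$ has to be refuted by appealing to a specific string outside $L_n$.
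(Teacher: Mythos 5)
Your proof is correct and follows essentially the same route as the paper: reduce to a realtime automaton via Theorem~\ref{thm1}, show the stack after reading $b^k$ is a single symbol $X_k$ popped by $a$, prove the $X_k$ pairwise distinct, and then force an extra pushdown symbol in the initial string $\alpha$. The only difference is the last step: the paper simply observes that $\alpha\in\{X_1,\ldots,X_{n-1}\}^+$ would make $a^{|\alpha|}$ accepted, whereas you bound $|\alpha|\le 2$ and refute each identification of the top symbol with some $X_k$ case by case (via $a\notin L_n$, $b^n\notin L_n$, and determinism) --- slightly longer but equally valid.
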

  \begin{proof}
    The case $n=1$ is trivial and follows from the definition. Thus, consider the case $n\ge 2$. Let $M_n=(\{a,b\},\Gamma,\delta,\alpha)$ be a stateless deterministic pushdown automaton accepting the language $L_n$. By Theorem~\ref{thm1}, we can assume that $M_n$ is realtime. Consider the computations on strings $b^ka$ for all $k=1,2,\ldots,n-1$. Then
    \[
      \alpha \xrightarrow{\ b\ } \pi_1 X_1\xrightarrow{\ b\ }\pi_2 X_2 \xrightarrow{\ b\ }\ldots\xrightarrow{\ b\ } \pi_{n-1} X_{n-1} \tto[\ a\ ]\eps\,.
    \]
    As symbol $a$ is accepted from all configurations $\pi_k X_k$, it implies that $X_k\xrightarrow{\ a\ }\eps$ and $\pi_k=\eps$, for all $k=1,2,\ldots,n-1$, that is,
    \[
      \alpha \xrightarrow{\ b\ } X_1\xrightarrow{\ b\ } X_2 \xrightarrow{\ b\ }\ldots\xrightarrow{\ b\ } X_{n-1} \xrightarrow{\ a\ }\eps\,.
    \]
    Assume that there exist $i<j$ such that $X_i=X_j$. Since $X_i\tto[\ b^{n-1-i}a\ ]\eps$, the automaton accepts the string $b^{j+n-1-i}a$ that does not belong to the language $L_n$ because $n-1+(j-i)>n-1$. Hence, all the pushdown symbols $X_i$, for $i=1,2,\ldots,n-1$, are pairwise different. It remains to show that the automaton needs at least one more pushdown symbol. To do this, consider the form of the initial string $\alpha$. If $\alpha\in\{X_1,X_2,\ldots,X_{n-1}\}^+$, then $\alpha\tto[\ a^{|\alpha|}\ ] \eps$ because $X_i\xrightarrow{\ a\ }\eps$, for all $i=1,2,\ldots,n-1$. Thus, the automaton needs at least $n$ pushdown symbols.
  \end{proof}

  As a consequence of the previous two lemmas, we have the following result. The strictness of the inclusion has been shown in~\cite{mvz12}. We have added binary witness languages.
  \begin{theorem}
    For each $n\ge 1$, $n$-SDPDA $\subsetneq (n+1)$-SDPDA. In addition, the witness languages are binary.
  \end{theorem}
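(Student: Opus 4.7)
The plan is to deduce the theorem immediately from the two preceding lemmas applied to the language $L_{n+1}$ from~(\ref{LN}). First I would dispatch the inclusion $n\text{-SDPDA}\subseteq(n+1)\text{-SDPDA}$, which is trivial: any stateless deterministic pushdown automaton whose pushdown alphabet has size at most $n$ is also one whose pushdown alphabet has size at most $n+1$, so no construction is required for containment.

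For strictness, I would take $L_{n+1}$ as the witness language. Applying the first lemma at level $n+1$ gives a stateless deterministic pushdown automaton with exactly $n+1$ pushdown symbols accepting $L_{n+1}$, so $L_{n+1}\in (n+1)\text{-SDPDA}$. Applying the second lemma at level $n+1$ rules out any stateless deterministic pushdown automaton with only $n$ pushdown symbols for $L_{n+1}$, so $L_{n+1}\notin n\text{-SDPDA}$. Combining the two gives $n\text{-SDPDA}\subsetneq(n+1)\text{-SDPDA}$. The binary-alphabet claim is then automatic from the definition in~(\ref{LN}): each $L_k$ uses only the input symbols $a$ and $b$, and in particular the witness $L_{n+1}$ is over $\{a,b\}$.

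The real work has already been absorbed into the lower-bound lemma, so the main obstacle is behind us; the remaining assembly is purely bookkeeping. The hardest step in the overall argument is the pigeonhole inside that lemma, which forces $X_1,\ldots,X_{n-1}$ to be pairwise distinct symbols and then shows that the initial pushdown string must contribute an additional symbol to prevent the spurious acceptance of $a^{|\alpha|}$. Granted both lemmas, the theorem is just a matter of shifting the index from $n$ to $n+1$, and the novelty over~\cite{mvz12} lies exactly in the fact that $L_{n+1}$ can now be taken over a two-letter alphabet instead of one of size $2n$.
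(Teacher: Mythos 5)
Your proposal matches the paper's argument: the theorem is stated there precisely as a consequence of the two preceding lemmas, with $L_{n+1}$ (over $\{a,b\}$) serving as the binary witness and the inclusion being immediate from the definition. The assembly you describe, including shifting the index from $n$ to $n+1$, is exactly how the paper intends the result to follow.
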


\subsection{Non-input pushdown symbols}
  In~\cite{mvz12}, the authors formulated a question of whether an infinite hierarchy can be established based on the number of non-input pushdown symbols, that is, whether there exists a similar hierarchy when Definition~\ref{def1} is modified as follows.
  \begin{definition}\label{def2}
    A stateless deterministic pushdown automaton $M=(\Sigma,\Gamma,\delta,\alpha)$ is an $n$-non-input-pushdown-alphabet-limited, for $n\ge 1$, if the number of non-input pushdown symbols is limited by $n$, that is, if $|\Gamma|-|\Sigma| \le n$. The family $n$-niSDPDA denotes all languages accepted by an $n$-non-input-pushdown-alphabet-limited stateless deterministic pushdown automaton.
  \end{definition}
  
  As an immediate consequence of the previous results, we get the answer to this question.
  \begin{theorem}
    For each $n\ge 0$, $n$-niSDPDA $\subsetneq (n+1)$-niSDPDA.
  \end{theorem}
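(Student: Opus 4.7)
The plan is to derive this hierarchy directly from the binary-alphabet hierarchy established in the previous subsection. Because the witness languages $L_k$ from~(\ref{LN}) live over the fixed alphabet $\{a,b\}$ of size $2$, the quantity $|\Gamma|-|\Sigma|$ differs from $|\Gamma|$ by a constant, so the hierarchy on non-input pushdown symbols merely shifts by two levels with respect to the hierarchy on total pushdown symbols.

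Concretely, I would take $L_{n+3}$ from~(\ref{LN}) as the witness separating $n$-niSDPDA from $(n+1)$-niSDPDA. The inclusion $n$-niSDPDA $\subseteq (n+1)$-niSDPDA is immediate from Definition~\ref{def2}. For membership $L_{n+3}\in (n+1)$-niSDPDA, the construction from the preceding lemma yields a stateless realtime pushdown automaton with $\Gamma=\{X_0,X_1,\ldots,X_{n+2}\}$ and $\Sigma=\{a,b\}$, so $|\Gamma|-|\Sigma|=(n+3)-2=n+1$. For non-membership $L_{n+3}\notin n$-niSDPDA, I would argue by contradiction: any accepting automaton satisfying $|\Gamma|-|\Sigma|\le n$ with $|\Sigma|=2$ would have $|\Gamma|\le n+2$, contradicting the preceding lemma, which asserts that every stateless deterministic pushdown automaton accepting $L_{n+3}$ needs at least $n+3$ pushdown symbols.

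The only delicate point is that Definition~\ref{def2} does not explicitly force $\Sigma$ to equal $\mathrm{alph}(L)$; one might worry about padding $\Sigma$ with unused letters in order to inflate the budget of non-input pushdown symbols. This is harmless: input letters that never appear in any transition can be removed without altering the accepted language, so for $L_{n+3}\subseteq\{a,b\}^*$ one may assume $\Sigma=\{a,b\}$. After this normalisation the theorem is essentially a two-line corollary of the two lemmas of the previous subsection, and I do not anticipate any substantive obstacle.
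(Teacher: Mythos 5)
Your proof is correct and is essentially the paper's own argument: the paper likewise reduces the statement to the two binary-alphabet lemmas, defining $K_n=L_{n+2}$ and using $|\Sigma|=2$ to convert the bound on $|\Gamma|$ into a bound on $|\Gamma|-|\Sigma|$ (your witness $L_{n+3}$ is the same construction with the index shifted by one, and is in fact the right choice for separating $n$-niSDPDA from $(n+1)$-niSDPDA). One small caveat: your normalisation ``remove unused input letters'' does not by itself dispose of the padding worry, because deleting letters from $\Sigma$ increases $|\Gamma|-|\Sigma|$ while the padded automaton still satisfies the literal bound of Definition~\ref{def2}; like the paper, you must simply read the definition with $\Sigma$ fixed to the alphabet $\{a,b\}$ of the language, under which convention your two-line corollary (and the paper's) goes through.
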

  \begin{proof}
    We define languages $K_n = L_{n+2}$. Then any stateless deterministic pushdown automaton accepting the language $K_n$ requires $n+2$ pushdown symbols. Since the input alphabet is binary, we immediately get $|\Gamma|-|\Sigma|=n$, as required.
  \end{proof}

\section{Generalization to realtime pushdown automata}
  It is a natural question to ask whether a similar hierarchy can be obtained for the other levels of the state hierarchy. To prove this for realtime pushdown automata, we define the following sequence of languages
  \begin{equation}\label{MN}
    \begin{aligned}
      L_{m,1} & = \{a^c\}, \text{ for some } c\ge 0\,,\text{ and }\\
      L_{m,n} & = \{b^k a \mid 1\le k\le mn-1\}, \text{ for any } m\ge 1 \text{ and } n\ge 2\,.
    \end{aligned}
  \end{equation}
  
  We now prove that for all $m,n\ge 1$, any $m$-state deterministic pushdown automaton accepting the language $L_{m,n}$ requires at least $n$ pushdown symbols. The sufficiency is shown first. 
  
  \begin{lemma}
    For any $m,n\ge 1$, there exists an $m$-state realtime pushdown automaton with $n$ pushdown symbols accepting the language $L_{m,n}$ defined in (\ref{MN}).
  \end{lemma}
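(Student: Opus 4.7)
The plan is to mimic the stateless construction from the previous section but to exploit the $m$ available states to multiply the counting capacity by $m$, so that $n$ pushdown symbols suffice to distinguish the $mn$ prefixes $b^{0}, b^{1}, \ldots, b^{mn-1}$. The key idea is to treat each pair $(q,X)\in Q\times\Gamma$ as a single ``counter value''.

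For $n=1$ the language $L_{m,1}$ is a fixed singleton $\{a^c\}$, for which an $m$-state realtime pushdown automaton with one pushdown symbol is immediate: I take $c=1$, the pushdown alphabet $\{X_1\}$ with $X_1$ as initial pushdown symbol, a single accepting state $q_0$, and the transition $\delta(q_0,X_1,a)=(q_0,\eps)$ (adding $m-1$ unreachable states if $m>1$).

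For $n\ge 2$, I would construct $M_{m,n}$ with state set $\{q_0,q_1,\ldots,q_{m-1}\}$, pushdown alphabet $\{X_1,\ldots,X_n\}$, initial state $q_0$, initial pushdown symbol $X_1$, and set of accepting states $\{q_0\}$. The $b$-transitions advance the pair $(q_i,X_j)$ lexicographically: $\delta(q_i,X_j,b)=(q_{i+1},X_j)$ for $0\le i\le m-2$ and $1\le j\le n$, and $\delta(q_{m-1},X_j,b)=(q_0,X_{j+1})$ for $1\le j\le n-1$, while $\delta(q_{m-1},X_n,b)$ is left undefined. Each $a$-transition pops the top symbol and returns to $q_0$: $\delta(q_i,X_j,a)=(q_0,\eps)$ for every pair $(i,j)\ne(0,1)$, while $\delta(q_0,X_1,a)$ is undefined. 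A straightforward induction on $k$ shows that after reading $b^k$ the unique reachable configuration has state $q_{k\bmod m}$ and pushdown $X_{\lfloor k/m\rfloor+1}$; this is defined precisely for $0\le k\le mn-1$ and fails at $k=mn$. Combined with the form of the $a$-transitions this yields $L(M_{m,n})=L_{m,n}$, and by construction $M_{m,n}$ is deterministic, realtime, has $m$ states, and uses exactly $n$ pushdown symbols.

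No step poses a real obstacle; the only detail to watch is the exclusion of the pair $(q_0,X_1)$ from the $a$-transitions, which ensures that the string $a$ itself (not in $L_{m,n}$) is rejected while every configuration reached from the initial one by some $b^k$ with $k\ge 1$ still leads immediately to acceptance.
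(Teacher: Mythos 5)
Your construction is correct and follows essentially the same approach as the paper: both count the prefixes $b^k$ by pairs in $Q\times\Gamma$, the only (immaterial) difference being that you cycle the state modulo $m$ and let the pushdown symbol record $\lfloor k/m\rfloor$, whereas the paper cycles the pushdown symbol modulo $n$ and advances the state, with the accepting state and the excluded initial pair adjusted accordingly. You also explicitly settle the $n=1$ case (choosing $c=1$), which the paper's proof passes over in silence.
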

  \begin{proof}
    Consider the language $L_{m,n}$ for $m\ge 1$ and $n\ge 2$. This language is accepted by the $m$-state deterministic pushdown automaton \[M_{m,n}=(\{q_0,q_1,\ldots,q_{m-1}\},\{a,b\},\{X_0,X_1,\ldots,X_{n-1}),\delta,q_0,X_0,\{q_{m-1}\})\,,\] where the transition function $\delta$ is defined as follows:
    \begin{enumerate}
      \item $\delta(q_j,X_i,b)=(q_j,X_{i+1})$, for $0\le j\le m-1$ and $0\le i\le n-2$,
      \item $\delta(q_j,X_{n-1},b)=(q_{j+1},X_0)$, for $0\le j\le m-2$,
      \item $\delta(q_0,X_i,a)=(q_{m-1},\eps)$, for $1\le i\le n-1$,
      \item $\delta(q_j,X_i,a)=(q_{m-1},\eps)$, for $1\le j\le m-1$ and $0\le i\le n-1$.
    \end{enumerate}
    Pushdown symbols in combination with states are used to count the number of $b$'s and to restrict this number to $mn-1$. Notice that $M_{m,n}$ is realtime and that $|\Gamma|=n$.
  \end{proof}
    
  We now prove that at least $n$ pushdown symbols are necessary for any $m$-state realtime pushdown automaton to accept the language $L_{m,n}$.
  \begin{lemma}
    For any $m,n\ge 1$, there does not exist any $m$-state realtime pushdown automaton accepting the language $L_{m,n}$ defined in (\ref{MN}) with $n-1$ pushdown symbols.
  \end{lemma}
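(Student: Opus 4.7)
My plan is to parallel the proof of the stateless lower bound, replacing \emph{pushdown symbol} by \emph{(state, top pushdown symbol) pair} and using the realtime restriction to avoid any $\eps$-transition bookkeeping. The edge cases are immediate: $n=1$ forces $|\Gamma|\ge 1 > 0 = n-1$, while $m=1$ coincides with the stateless case, since $L_{1,n}=L_n$ and the earlier stateless lemma applies (a $1$-state realtime PDA is a stateless realtime PDA). So I assume $m\ge 2$ and $n\ge 2$ and a putative $m$-state realtime PDA $M=(Q,\{a,b\},\Gamma,\delta,q_0,Z_0,F)$ accepting $L_{m,n}$ with $|\Gamma|\le n-1$.

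The key observation is that for every $k\in\{1,\ldots,mn-1\}$ the deterministic realtime computation on $b^k$ from $(q_0,Z_0)$ reaches a unique configuration $(p_k,\gamma_k)$, and then \emph{exactly one} further step on $a$ must produce both an empty pushdown and a final state. Since a single realtime step pops the topmost symbol and pushes whatever $\delta$ prescribes, this forces $\gamma_k$ to be a single symbol $X_k\in\Gamma$ with $\delta(p_k,X_k,a)=(f_k,\eps)$ for some $f_k\in F$. I expect this reduction to be the main obstacle: one has to argue carefully that no longer pushdown can be emptied in a single realtime step and that every intermediate transition $\delta(p_k,X_k,b)$ must likewise output a single-symbol pushdown $X_{k+1}$, so that the whole chain of configurations is described by the pairs $(p_k,X_k)$.

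Once those pairs are extracted, a counting argument closes the proof: there are at most $m\cdot|\Gamma|\le m(n-1)=mn-m$ distinct (state, symbol) pairs but $mn-1$ indices, and the inequality $mn-1>mn-m$ holds precisely because $m\ge 2$. Pigeonhole yields $i<j$ with $(p_i,X_i)=(p_j,X_j)$; by determinism the computation from $(p_j,X_j)$ on the suffix $b^{mn-1-i}a$ is identical to that from $(p_i,X_i)$, which is accepting. Hence $M$ accepts $b^{mn-1+(j-i)}a$, a string with strictly more than $mn-1$ occurrences of $b$, contradicting $L(M)=L_{m,n}$. This yields the desired lower bound and, combined with the preceding lemma, establishes the hierarchy for realtime pushdown automata on every level of the state hierarchy.
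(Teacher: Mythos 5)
Your proof is correct and takes essentially the same route as the paper: determinism plus the realtime restriction force the configuration after reading $b^k$ to be a single (state, top symbol) pair from which $a$ pops to an empty pushdown in a final state, and a pigeonhole over the at most $m(n-1)$ such pairs gives a repetition $(p_i,X_i)=(p_j,X_j)$ that pumps to the accepted string $b^{\,j+mn-1-i}a\notin L_{m,n}$. The only cosmetic difference is that you dispatch $m=1$ by reducing to the stateless lemma, whereas the paper covers it within the same counting argument by including the initial configuration $(q_0,X_0)$ among the pigeonholed pairs; both are sound.
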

  \begin{proof}
    Let $m\ge 1$ be fixed, but arbitrary. The case $n=1$ is trivial and follows from the definition. Thus, consider the case $n\ge 2$. For the sake of contradiction, assume that there exists an $m$-state realtime pushdown automaton $M_{m,n}=(\{q_0,q_1,\ldots,q_{m-1}\},\{a,b\},\Gamma,\delta,$ $q_0,X_0,F)$ with $n-1$ pushdown symbols accepting the language $L_{m,n}$. Consider the computations on strings $b^ka$ for all $k=1,2,\ldots,mn-1$. Then
    \begin{align*}
      (q_0,X_0) & \xrightarrow{\ b\ } (q_1,\pi_1 X_1)\xrightarrow{\ b\ }(q_2,\pi_2 X_2) \xrightarrow{\ b\ }\ldots\\
        \ldots  & \xrightarrow{\ b\ } (q_{mn-1},\pi_{mn-1} X_{mn-1}) \tto[\ a\ ](q_f,\eps)\,,
    \end{align*}
    where $q_f\in F$. As symbol $a$ is accepted from all configurations $(q_k,\pi_k X_k)$, it implies that $(q_k,X_k)\xrightarrow{\ a\ }(q_{f_k},\eps)$, for some $q_{f_k}\in F$, and $\pi_k=\eps$, for all $k=1,2,\ldots,mn-1$, that is,
    \[
      (q_0,X_0) \xrightarrow{\ b\ } (q_1,X_1)\xrightarrow{\ b\ }(q_2,X_2) \xrightarrow{\ b\ }\ldots\xrightarrow{\ b\ } (q_{mn-1},X_{mn-1}) \xrightarrow{\ a\ }(q_f,\eps)\,.
    \]
    As we have $m$ states and $n-1$ pushdown symbols, there must exist $i<j$ such that $(q_i,X_i)=(q_j,X_j)$. Since $(q_i,X_i)\tto[\ b^{mn-1-i}a\ ]\eps$, the automaton accepts string $b^{j+mn-1-i}a$ that does not belong to the language $L_{m,n}$ because $mn-1+(j-i)>mn-1$, which is a contradiction. Thus, the automaton needs at least $n$ pushdown symbols.
  \end{proof}
  
  We now present definitions of families of languages accepted by $m$-state deterministic pushdown automata with $n$ pushdown symbols.
  \begin{definition}
    A pushdown automaton $M=(Q,\Sigma,\Gamma,\delta,q_0,Z_0,F)$ is $n$-pushdown-alpha\-bet-limited, for $n\ge 1$, if $|\Gamma|\le n$. The family $(m,n)$-RPDA denotes all languages accepted by an $m$-state realtime pushdown automaton with $n$ pushdown symbols.
  \end{definition}
  \begin{definition}
    A pushdown automaton $M=(Q,\Sigma,\Gamma,\delta,q_0,Z_0,F)$ is an $n$-non-input-pushdown-alphabet-limited, for $n\ge 1$, if $|\Gamma|-|\Sigma| \le n$. The family $(m,n)$\mbox{-}niRPDA denotes all languages accepted by an $m$-state realtime pushdown automaton with $n$ non-input pushdown symbols.
  \end{definition}

  As a consequence of the previous two lemmas, we have the following results.
  \begin{theorem}
    For each $m,n\ge 1$, $(m,n)$-RPDA $\subsetneq (m,n+1)$-RPDA. In addition, the witness languages are binary.
  \end{theorem}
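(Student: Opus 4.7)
The plan is to derive the theorem as an immediate corollary of the two lemmas just proved, applied with $n$ replaced by $n+1$. First I would observe that $(m,n)$-RPDA $\subseteq (m,n+1)$-RPDA holds by monotonicity: the preceding definition bounds $|\Gamma|$ only from above, so any $m$-state realtime pushdown automaton with at most $n$ pushdown symbols is \emph{a fortiori} an $m$-state realtime pushdown automaton with at most $n+1$ pushdown symbols. Thus the containment part of the statement needs essentially no work.

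For the strictness I would propose $L_{m,n+1}$ from (\ref{MN}) as a witness. The first of the two preceding lemmas, reindexed at $n+1$, produces an $m$-state realtime pushdown automaton $M_{m,n+1}$ with exactly $n+1$ pushdown symbols accepting $L_{m,n+1}$, so $L_{m,n+1}$ lies in $(m,n+1)$-RPDA. The second lemma, again reindexed at $n+1$, rules out any $m$-state realtime pushdown automaton with $(n+1)-1=n$ pushdown symbols accepting $L_{m,n+1}$, so $L_{m,n+1}$ does not lie in $(m,n)$-RPDA. Combining the two gives the strict inclusion, and the binary-alphabet clause follows for free from $L_{m,n+1}\subseteq\{a,b\}^*$, which is immediate from the defining equation.

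I do not foresee any genuine obstacle; the whole argument is essentially a two-line assembly of the preceding lemmas. The only point worth a second glance is the monotonicity step, which relies on the upper-bound formulation $|\Gamma|\le n$ used in the preceding definition rather than on an exact-size restriction, but this is spelled out in the definition itself, so no extra reasoning is required.
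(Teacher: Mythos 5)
Your proposal is correct and matches the paper's argument: the paper also derives the theorem directly from the two preceding lemmas, with $L_{m,n+1}$ as the binary witness and the inclusion following from the definition's upper bound on $|\Gamma|$.
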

  
  \begin{theorem}
    For each $m\ge 1$ and $n\ge 0$, $(m,n)$-niRPDA $\subsetneq (m,n+1)$-niRPDA.
  \end{theorem}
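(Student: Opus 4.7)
The plan is to transfer, almost verbatim, the argument used for the analogous stateless non-input theorem, but now relying on the realtime language family~(\ref{MN}) and the two preceding lemmas. For each $m\ge 1$ and $n\ge 0$, I would take as witness $K_{m,n} := L_{m,n+3}$. The two preceding lemmas then supply both directions: the sufficiency lemma yields an $m$-state realtime pushdown automaton accepting $K_{m,n}$ with exactly $n+3$ pushdown symbols, and the necessity lemma rules out any such automaton with fewer than $n+3$ pushdown symbols.

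Since $K_{m,n}$ is a language over the binary alphabet $\{a,b\}$, every $m$-state realtime pushdown automaton accepting it has $|\Sigma|=2$. Consequently the constructed machine satisfies $|\Gamma|-|\Sigma|=(n+3)-2=n+1$, so $K_{m,n}\in (m,n+1)$-niRPDA. On the other hand, the lower bound gives $|\Gamma|\ge n+3$, hence $|\Gamma|-|\Sigma|\ge n+1>n$, for every accepting machine, and therefore $K_{m,n}\notin (m,n)$-niRPDA. Combining these two observations produces the strict inclusion required by the theorem.

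No serious obstacle remains, because both the upper and the lower pushdown-symbol bounds have already been carried out in the two preceding lemmas. The only step that needs any care is picking the right offset in the index of $L_{m,\cdot}$ so that, after subtracting the fixed input size $|\Sigma|=2$, the counts land on the two consecutive levels~$n$ and~$n+1$ named in the statement.
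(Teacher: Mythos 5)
Your proof is correct and follows essentially the same route as the paper: choose the appropriate member of the family $L_{m,\cdot}$ from (\ref{MN}) and combine the two preceding lemmas with the fact that the input alphabet is binary, so that the pushdown-symbol bounds translate into bounds on $|\Gamma|-|\Sigma|$. The only difference is the offset: you take $K_{m,n}=L_{m,n+3}$, which indeed lies in $(m,n+1)$-niRPDA but not in $(m,n)$-niRPDA, whereas the paper sets $K_{m,n}=L_{m,n+2}$, whose minimal value of $|\Gamma|-|\Sigma|$ is exactly $n$ — so, read literally, the paper's witness for the step from $(m,n)$ to $(m,n+1)$ is $K_{m,n+1}$, and your choice of offset is the one that matches the stated levels directly.
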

  \begin{proof}
    We define languages $K_{m,n} = L_{m,n+2}$. Then any $m$-state realtime pushdown automaton accepting the language $K_{m,n}$ requires $n+2$ pushdown symbols. Since the input alphabet is binary, we immediately get $|\Gamma|-|\Sigma|=n$, as required.
  \end{proof}

  Finally, note that Theorem~\ref{thm1} does not hold for deterministic pushdown automata that are not stateless. The following example shows that there exists a two-state deterministic automaton with two pushdown symbols accepting the language $L_{m,n}$, for all $m,n\ge 1$. This implies that to establish a similar infinite hierarchy for deterministic pushdown automata based on the number of states and pushdown symbols over a binary (constant) input alphabet requires more sophisticated approaches.
  \begin{example}
    Let $m,n\ge 1$ be fixed, but arbitrary. Then the automaton $M=(\{f,q\},\{a,b\},$ $\{Z,B\},\delta,f,Z,\{f\})$, where $\delta$ is defined so that
    \begin{enumerate}
      \item $\delta(f,Z,b)=(q,B^{mn-1})$,
      \item $\delta(q,B,b)=(q,\eps)$,
      \item $\delta(q,B,a)=(f,\eps)$,
      \item $\delta(f,B,\eps)=(f,\eps)$,
    \end{enumerate}
    accepts the language $L_{m,n}$. The automaton reads the first input symbol $b$, changes its state, and pushes a string of $mn-1$ symbols $B$ to the pushdown. Then it compares input symbol $b$ against $B$ on the pushdown until it reads symbol $a$ with $B$ on the pushdown. In that case, it goes to the other state where it empties the whole pushdown without reading any other input symbol.
  \end{example}

\section{Conclusion}
  It is a natural question to ask whether a similar hierarchy can be obtained for deterministic pushdown automata. It is likely that such a hierarchy exists on each level, however, we have not established it in this paper. Can such a hierarchy be established with witness languages over a binary (constant) alphabet?

  Languages accepted by stateless deterministic pushdown automata are also called {\em simple languages\/} because they are generated by so-called {\em simple grammars}~\cite{Harrison}. It was shown in~\cite{KorenjakHopcroft} that the equivalence problem for simple grammars is decidable, which is, together with the general result by S{\'e}nizergues~\cite{Senizergues99,Senizergues2001}, in contrast to the fact that containment is undecidable even for simple languages as shown in~\cite{Friedman}. It is remarkable that the best known algorithm for the equivalence of simple languages works in time $O(n^6 \text{ polylog } n)$, where $n$ is the size of the grammar, see~\cite{LasotaRytter}. Hence, simple languages are in some sense the simplest languages for which containment is undecidable. Note that further simplification results in so-called {\em very simple languages\/}, for which the containment was shown decidable in~\cite{WT93}, and the algorithm improved and simplified in~\cite{Makinen}. Simple languages also play a role in process algebras, see, e.g.,~\cite{GrooteH94}.
  
  Finally, note that it is unsolvable to decide whether for a context-free language $L$ there exists an integer $n$ such that $L$ is accepted by a stateless deterministic pushdown automaton with a pushdown alphabet of cardinality $n$. This follows immediately from the undecidability of the problem whether a context-free language $L$ can be accepted by a stateless deterministic pushdown automaton, see~\cite{Harrison} where this problem is formulated as an exercise. As far as the author knows, it is also an open problem whether there exists an algorithm to compute, for a given deterministic context-free language $L$, the minimal $n$ such that $L$ belongs to the $n$th level of the state hierarchy.

\section*{Acknowledgement.}
  Research supported by the GA{\v C}R grant no. P202/11/P028 and by RVO:~67985840.

\bibliographystyle{plain}
\bibliography{masopust}

\end{document}